\documentclass{article}
\usepackage[T1]{fontenc}
\usepackage{graphicx}
\usepackage{amsmath} 
\usepackage{amsfonts}
\usepackage{amssymb}
\usepackage{amsthm}
\usepackage{xcolor}
\usepackage{multirow}
\usepackage{hyperref}
\newtheorem{thrm}{Theorem}
\newtheorem{rmrk}{Remark}
\newtheorem{prpstn}{Proposition}

\begin{document}

\title{A mathematical perspective on the paradox that chemotherapy sometimes works backwards}

\author{\large{Luis A. Fernández, Isabel Lasheras and Cecilia Pola} \\ 
Department of Mathematics, Statistics and Computation,
\\ University of Cantabria, \\
Avda. de los Castros, s/n, 39005 Santander, Spain.
\footnote{lafernandez@unican.es, isabellasheras02@gmail.com, polac@unican.es}}

\date{10/03/2025}
\maketitle

\begin{abstract}
Doctors are well aware that sometimes cancer treatments not only fail, but even work backwards, i.e. they make the treated tumor grow. In this work we present a mathematical perspective on this paradox in the case of chemotherapy, by studying a minimally parameterized mathematical model for the system composed of the tumor and the surrounding vasculature. To this end, we will use a system of two well-established nonlinear ordinary differential equations, which incorporates the cytotoxic (via the Norton-Simon hypothesis) and antiangiogenic effects of chemotherapy. Finally, we provide two theoretical ways to avoid these anomalies.
\end{abstract}

{\bf Mathematics Subject Classification.} 34A34, 34D05, 37N25, 92C50.

{\bf Keywords:} Cytotoxic chemotherapy, antiangiogenic therapy, paradox, 

Norton-Simon hypothesis.

\section*{Introduction}
It is well-known that chemotherapy, surgery and radiotherapy are the main types of treatment in oncology. What is less known (and not yet well understood) is that all of them can sometimes end up having the opposite effect to the desired one. This fact has been reported in the medical literature and its cause (although not entirely clear) is attributed by some authors to the effect of the cellular debris (``the resulting dead cell fragments that have yet to be cleared by the immune system") produced by these treatments. As pointed out in \cite{Haak-etal-2021}: ``every attempt to kill cancer cells and cause tumor cell death is a double-edged sword as the resulting unresolved debris stimulates the growth of surviving tumor cells", ``the debris produced by cytotoxic cancer therapy can also contribute to a tumor microenvironment that promotes tumor progression and recurrence", ``chemotherapy, surgery, and radiation all contribute to debris-stimulated cancer. The debris generated from these therapies secrete pro-tumorigenesis factors to promote tumor angiogenesis, proliferation, growth, metastasis, and recurrence". This paradox has been reported also in other recent references, such as \cite{Dalterioetal2020}: ``Accumulating pre-clinical evidence suggests in fact that chemotherapy induces intra-tumoral and systemic changes that can paradoxically promote cancer cell survival/proliferation ultimately fostering dissemination to distant organs". Similar comments can be found in \cite{Chang-etal-2019} and \cite{Sulciner-etal-2018}. 

In this work, we present a mathematical perspective on this issue, by studying a well-established mathematical model for the system composed of the tumor and the surrounding vasculature, \cite{Hahnfeldt-etal-1999}. This system consists of two ordinary differential equations (ODE): the Gompertz ODE for the evolution of tumor volume and the ODE by Hahnfeldt et al. for the evolution of the effective vascular support. We incorporate the effects of cytotoxic and antiangiogenic treatments, using the Norton-Simon hypothesis for the first one (\cite{Norton-Simon1977, Simon-Norton2006}) with Emax type pharmacodynamics and the models by Hahnfeldt et al. for the second one (see \cite{Hahnfeldt-etal-1999, Poleszczuk-etal-2015}). From the full model, that it is minimally parameterized compared to others in the literature (see for instance \cite{Bodzioch-etal24}), we deduce that:

\begin{itemize}
\item[1)] Cytotoxic chemotherapy transforms the tumor-vasculature system into an unstable one when it is applied at a sufficiently high concentration. Then, in normal cases (the majority), tumors shrink in size or slow down their growth, as is well known, but there are also abnormal situations whose volume (paradoxically) increases faster with cytotoxic chemotherapy treatment than without any treatment. 

\item[2)] The set of abnormal cases varies with the treatment used and seems to be small, but expands with the effect of the cytotoxic chemotherapy treatment. It is mainly located around the area where the tumor volume is very large and the vasculature is small. The boundary between this and the set of normal cases is quite blurred, but it can be approximately determined in the model by numerical techniques. It seems crucial to distinguish whether a situation is abnormal before applying a given cytotoxic chemotherapy treatment.

\item[3)] Fortunately, there are some ways to avoid the abnormal situations. One is simply to delay the onset of treatment. If this is possible in the practice, the size of the tumor vasculature should be monitored until it reaches an adequate size. To our knowledge this strategy has not been reported in the literature, but some references suggest that monitoring the size of the vasculature is an important tool in oncology: ``Our study encourages the measure of tumor vasculature as a surrogate for tumor carrying capacity as a biomarker, which may ultimately lead to better-informed patient-specific synergizing of cytotoxic and antiangiogenic treatment", see \cite{Poleszczuk-etal-2015}.
A second procedure is to apply an antiangiogenic treatment prior to achieve a transitory normalization of the vasculature in order to be able to apply a cytotoxic treatment afterwards. This technique has been reported in various medical sources (see \cite{Jain2001, Jain2005, Jain-etal-2011}) and our system also reflects this fact. 

We will illustrate the overall results with numerical simulations using parameter values taken from the specialized literature, adapted to the case of Bevacizumab as regards antiangiogenic treatment.
\end{itemize}

The origin of this work was the academic project \cite{TFG} submitted by the second author under the supervision of the first author with the help of the third one. This is a completely revised version, including new results. 

\section{Pharmacodynamics}

Let us start by introducing the mathematical model proposed by Hahnfeldt and collaborators \cite{Hahnfeldt-etal-1999} to represent the macroscopic evolution of tumor-vasculature system before and after applying different therapies. Denoting by $V(t)$ the volume of the tumor and by $K(t)$ the volume of the surrounding vasculature (effective vascular support) at time $t$, tumor growth is assumed to be governed by the Gompertz law with the variable carrying capacity $K(t)$:

\begin{equation} 
V'(t) = - \lambda_1 V(t) \log \left( \frac{V(t)}{K(t)} \right), 
\label{EV1}
\end{equation}
where the positive parameter $\lambda_1$ denotes the tumor growth rate. To model the evolution of $K(t)$ the kinetics of angiogenic stimulation and inhibition due to the presence of the tumor has to be taken into account:

\begin{equation} 
K'(t) =  - \lambda_2 K(t) + b V(t) - d K(t)V^{2/3}(t).
\label{EK1}
\end{equation}

All the parameters have a physical meaning and are nonnegative:  $\lambda_2$ is related with the spontaneous loss of vasculature, $b$ with the stimulation due to factors secreted by the tumor
and $d$ with the endogenous inhibition. Parameter values of $\lambda_1, \lambda_2, b$ and $d$ have been estimated (see  \cite{Hahnfeldt-etal-1999} and \cite{Poleszczuk-etal-2015}) to describe average tumor growth for different cell tumors. 

In the literature, there are several ways to add the effect $E_c$ of cytotoxic treatments (those that kill cells, especially cancer cells, \cite{NCI}) to the Gompertz ODE. Historically, the ``log-kill hypothesis" was the first one (\cite{Norton-2014}). It states that a given dose of chemotherapy kills the same fraction of tumor cells regardless of the size of the tumor at the time of treatment. In mathematical terms, it can be written as

\begin{equation} 
V'(t) = - \lambda_1 V(t) \log \left( \frac{V(t)}{K(t)} \right)-E_c(t)V(t).
\label{EV2}
\end{equation}

Later on, it was found experimentally (see \cite{Norton-Simon1977})) that the proportion of cells killed by an effective therapy was not independent of tumor size, and this led to the ``Norton-Simon hypothesis": for a given drug and a given dose level, the cell kill is proportional to the growth rate that would be expected for an unperturbed tumor of that size. Expressed by formulas,

\begin{equation} 
V'(t) = - \lambda_1 V(t) \log \left( \frac{V(t)}{K(t)} \right)(1-E_c(t)).
\label{EV3}
\end{equation}

Several problems concerning the optimization of cytotoxic chemotherapy treatments related to this model (in the particular case where $K(t)$ is constant) were studied in \cite{Fernandez-Pola2019}.

On the other hand, there are also various ways to incorporate the effect $E_a$ of the antiangiogenic treatments (those that induce tumor cell starvation, by inhibiting the tumor vascular support). They depend on the mechanism used by the different antiangiogenic drugs. For those (such as Angiostatin, Endostatin or TNP-470) seeking to block blood vessel growth, the proposed mathematical model (\cite{Hahnfeldt-etal-1999}) was  

\begin{equation} 
K'(t) = - \lambda_2 K(t) + b V(t) - d K(t)V^{2/3}(t)-E_a(t)K(t).
\label{EK2}
\end{equation}

More recently (\cite{Poleszczuk-etal-2015}), a modification was introduced to improve the modeling of the effect of drugs that inhibit angiogenic stimulation (such as Bevacizumab). It takes the following form:

\begin{equation} 
K'(t) =  - \lambda_2 K(t) + \frac{b V(t)}{1+E_a(t)} - d K(t)V^{2/3}(t).
\label{EK3}
\end{equation}

For a specific expression of the cytotoxic effect of the treatment, we refer to the Emax model  
\begin{equation} E_c(t)=\dfrac{k_1c_c(t)}{k_2+c_c(t)}
\label{EEC}
\end{equation} 
with $c_c(t)$ denoting the concentration of the cytotoxic drug. Positive parameters $k_1$ and $k_2$ are experimentally estimated and represent, respectively, the maximum effect of the drug on the body and the half maximal concentration (concentration producing 50\% of the maximum effect, also known as $EC50$).  Anyway, here we will manage with a generic function $E_c$ because it simplifies the presentation and makes the conclusions applicable to more general frameworks.

Similarly, concerning the antiangiogenic treatment effect, we have in mind the expression \begin{equation} E_a(t) = e_ac_a(t),
\label{EECA}
\end{equation} 
where $c_a(t)$ denotes the concentration of the antiangiogenic drug and $e_a > 0$ is another parameter representing the drug impact. In the literature it has been estimated for models (\ref{EK2}) and (\ref{EK3})  in \cite{Hahnfeldt-etal-1999} and \cite{Poleszczuk-etal-2015}, respectively.

In summary, when no therapy is being applied, combining equations (\ref{EV1}) and (\ref{EK1}), we arrive to the following mathematical model for the system formed by the tumor and the surrounding vasculature:

\begin{equation} 
\left\{ \begin{aligned}
& V'(t) = - \lambda_1 V(t) \log \left( \frac{V(t)}{K(t)} \right), & V(0) = V_0,\\
& K'(t) = - \lambda_2 K(t) + b V(t) - d K(t)V^{2/3}(t), & K(0) = K_0.
\end{aligned}
\right.
\label{EVK}
\end{equation}

On the other hand, when one or both therapies are being applied, we will focus on the following ODE system:

\begin{equation} 
    \left\{ \begin{aligned}
V'(t) &= - \lambda_1 V(t) \log \left( \frac{V(t)}{K(t)} \right)\left(1-E_c(t)\right), \ V(0) = V_0, \\
K'(t) &= - \lambda_2 K(t) +\frac{b V(t)}{1+E_a(t)} - d K(t)V^{2/3}(t), \ K(0) = K_0,
\end{aligned}
\right.
\label{EVKwT}
\end{equation} 
although the ODE for $K$ could be changed by (\ref{EK2}) and our main conclusions would remain the same. 

Let us remark that for a cytotoxic non-antiangiogenic drug, $E_c(t) \in \mathbb{R}^+$ and $E_a(t) \equiv 0$ and when the drug only has effect on the vasculature but not on the tumor, then $E_c(t) \equiv 0$ and $ E_a(t) \in \mathbb{R}^+$. Of course, there  exist also   cytotoxic antiangiogenic treatments and then $E_c(t), E_a(t) \in \mathbb{R}^+$. As it is usual in the literature we are denoting $\mathbb{R}^+=(0,+\infty),$ 
$\mathbb{R}^-=(-\infty,0)$  and $\mathbb{R}^+_0=[0,+\infty)$.

When $b > \lambda_2$, it is well-known that the system (\ref{EVK}) has a unique critical point and in the sequel, it will be denoted by  $(V_c,V_c),$ where $V_c = \left(\frac{b-\lambda_2}{d}\right)^\frac{3}{2}.$ 

Let us recall some classical results about the behavior of the solutions of the system (\ref{EVK}), depending on the values of the parameters appearing there. 

\begin{thrm}
Let us assume that $\lambda_1, b, d \in \mathbb{R}^+$ and $\lambda_2 \in \mathbb{R}^+_0$. Given any initial condition $(V_0,K_0) \in (\mathbb{R}^+)^2,$ the Cauchy problem (\ref{EVK}) has a unique solution 
$(V,K) \in (C^\infty(\mathbb{R}^+_0))^2$.
Moreover, $(V(t),K(t)) \in (\mathbb{R}^+)^2$ for each $t \in \mathbb{R}^+_0$ and it is satisfied 
\begin{itemize}
    \item[$a)$] If $b > \lambda_2$, then $\displaystyle \lim_{t \to +\infty} \left(V(t),K(t)\right) = ( V_c,V_c)$.
    \item[$b)$] If $b \leq \lambda_2$, then $\displaystyle \lim_{t \to +\infty} \left(V(t),K(t)\right) = (0,0)$.
\end{itemize}
\label{T1}
\end{thrm}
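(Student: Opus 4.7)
I would split the argument into (i) local and global well-posedness together with boundedness, (ii) convergence in case (a) via Dulac and Poincar\'e--Bendixson, and (iii) direct asymptotic analysis in case (b).

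For (i), the right-hand side of (\ref{EVK}) is $C^\infty$ on $(\mathbb{R}^+)^2$ (the logarithm is smooth for $V, K > 0$, and so is $V^{2/3}$), so Picard--Lindel\"of gives a unique $C^\infty$ local solution. Positivity persists on the maximal interval of existence because the boundary of the phase space is repelling: for small $V$ with $K$ bounded below, $V' = \lambda_1 V \log(K/V) > 0$, while for small $K$ with $V > 0$ one has $K' \approx bV > 0$. To upgrade this to global existence with a uniform bound I would look for a forward-invariant square $[0, M]^2$. On the edge $V = M$, $V' \leq 0$ as soon as $M \geq K$; on $K = M$, $K' \leq 0$ rearranges to $M(\lambda_2 + dV^{2/3}) \geq bV$, and since $bV/(\lambda_2 + dV^{2/3})$ is increasing in $V$ the worst case is $V = M$, which reduces the condition to $M^{2/3} \geq (b - \lambda_2)/d$ (automatic when $b \leq \lambda_2$; equivalent to $M \geq V_c$ otherwise). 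Choosing $M \geq \max(V_0, K_0, V_c)$ produces such a square, yielding global existence and a bounded trajectory.

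For (ii), the unique interior equilibrium $(V_c, V_c)$ has Jacobian with trace $-(\lambda_1 + b) < 0$ and determinant $\tfrac{2\lambda_1}{3}(b - \lambda_2) > 0$, so it is locally asymptotically stable. To exclude periodic orbits I would invoke Dulac's criterion with $\phi(V, K) = 1/(VK)$: writing the right-hand side of (\ref{EVK}) componentwise as $F_1 = -\lambda_1 V \log(V/K)$ and $F_2 = -\lambda_2 K + bV - dKV^{2/3}$, one checks that
\[
\frac{\partial(\phi F_1)}{\partial V} + \frac{\partial(\phi F_2)}{\partial K} = -\frac{\lambda_1}{VK} - \frac{b}{K^2} < 0
\]
throughout the simply connected region $(\mathbb{R}^+)^2$. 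The trajectory being bounded and the axes being repelling, its $\omega$-limit is a nonempty compact connected invariant subset of $(\mathbb{R}^+)^2$; it contains no periodic orbit, so by Poincar\'e--Bendixson it reduces to $\{(V_c, V_c)\}$.

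For (iii), there is no equilibrium in $(\mathbb{R}^+)^2$. The computation $(V - K)'|_{V = K} = V(\lambda_2 - b + dV^{2/3}) > 0$ shows that $\{V \geq K\}$ is forward invariant; inside it $V' \leq 0$, so $V$ decreases to some $V_\infty \geq 0$. If $V_\infty > 0$, the sandwich $0 < K \leq V$ combined with $V' \to 0$ forces $K \to V_\infty$ as well, and then $K' \to V_\infty(b - \lambda_2 - dV_\infty^{2/3}) < 0$ contradicts $K$ having a finite limit; hence $V_\infty = 0$ and $K \leq V \to 0$. A trajectory starting in $\{V < K\}$ has $V' > 0$ and $K' \leq (b - \lambda_2)K \leq 0$, so either it enters $\{V \geq K\}$ in finite time (reducing to the previous case) or both components are monotone, and a parallel limit argument again forces $(V, K) \to (0, 0)$. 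The most delicate step of the whole plan will be in (ii): making the ``axes repel'' heuristic quantitative enough to guarantee that the $\omega$-limit set strictly avoids the coordinate axes, so that Poincar\'e--Bendixson can be applied with full rigor; an a priori uniform positive lower bound on $V$ and $K$ for large times would suffice.
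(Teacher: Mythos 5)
You should know first that the paper does not actually prove Theorem~\ref{T1}: its ``proof'' is a citation to Propositions~1 and~3 of d'Onofrio and Gandolfi (2004), plus a remark on the meaning of global asymptotic stability. Your proposal is therefore a genuinely different contribution --- a self-contained argument by standard planar dynamical-systems tools --- and its key computations are correct: the forward-invariant square $[0,M]^2$ with $M\geq\max\{V_0,K_0,V_c\}$ (using that $\gamma(V)=bV/(\lambda_2+dV^{2/3})$ is increasing), the Jacobian data $\mathrm{tr}=-(\lambda_1+b)<0$ and $\det=\tfrac{2\lambda_1}{3}(b-\lambda_2)>0$, and the Dulac function $1/(VK)$, whose weighted divergence $-\lambda_1/(VK)-b/K^2$ is strictly negative on the simply connected open quadrant, excluding closed orbits and separatrix cycles. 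The trade-off is that your route must confront the degeneracy of the vector field on the coordinate axes, which the cited reference handles with tailored estimates.

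That is exactly where the two genuine gaps sit. The first you flag yourself: Poincar\'e--Bendixson needs the $\omega$-limit set to be contained in the open quadrant, and nothing in your part~(i) gives a \emph{uniform-in-time} positive lower bound on $V$ and $K$ (your positivity argument is for each finite time). This must be supplied; one route is to note that $V$ can only decrease while $V>K$, and that the change of variable $\rho=\log(V/K)$ satisfies $\rho'\leq-(\lambda_1+b)\rho-(b-\lambda_2-dV^{2/3})$, so $\rho$ is strictly decreasing wherever $\rho\geq0$ and $V<V_c$; this pins the trajectory away from the corner and yields $\liminf V>0$, after which $K'\geq-(\lambda_2+dM^{2/3})K+bV$ gives $\liminf K>0$. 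The second gap is unacknowledged and occurs in case~(b): from ``$V$ monotone and bounded'' you conclude $V'\to0$, which is false in general (it requires Barbalat's lemma, i.e.\ uniform continuity of $V'$, which in turn needs $K$ bounded away from zero --- circular at that point of the argument). A clean repair: once $V\to V_\infty$, view $K'=bV-(\lambda_2+dV^{2/3})K$ as an asymptotically autonomous linear ODE, so $K\to\gamma(V_\infty)$; if $V_\infty>0$ then $\gamma(V_\infty)<V_\infty$ (since $b\leq\lambda_2$), whence $V'\to-\lambda_1V_\infty\log\bigl(V_\infty/\gamma(V_\infty)\bigr)<0$, contradicting the convergence of $V$. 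Equivalently, a bounded trajectory that converges must converge to an equilibrium, and system (\ref{EVK}) has none in $(\mathbb{R}^+)^2$ when $b\leq\lambda_2$. With these two repairs your plan goes through.
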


\begin{proof}
These results are proven (for instance) in \cite{Donofrio-Gandolfi2004} (see Propositions 1 and 3).
Let us emphasize that item $a)$ is equivalent to say  that $(V_c,V_c)$ is a globally asymptotically stable critical point in $(\mathbb{R}^+)^2$ for system (\ref{EVK}). Let us point out that, when $b \leq \lambda_2$, technically $(0,0)$ is not a critical point of the system, because $\lim_{(V,K)\rightarrow (0,0)} V\log \left( \frac{V}{K} \right)$ does not exist; but if it were, in this case we would say that  $(0,0)$ is also globally asymptotically stable.
\end{proof}

Figure~\ref{fig:Grafica_lambda1pos}  shows the two different behaviors of the trajectories.
In biological terms, case $a)$ represents the situation when the tumor is growing towards its critical lethal size $V_c$ (here being, $V_c = 17269 \ mm^3$) and case $b)$ when the tumor is shrinking towards its elimination.

\begin{figure}
    \centering
    \includegraphics[width=0.9\textwidth]{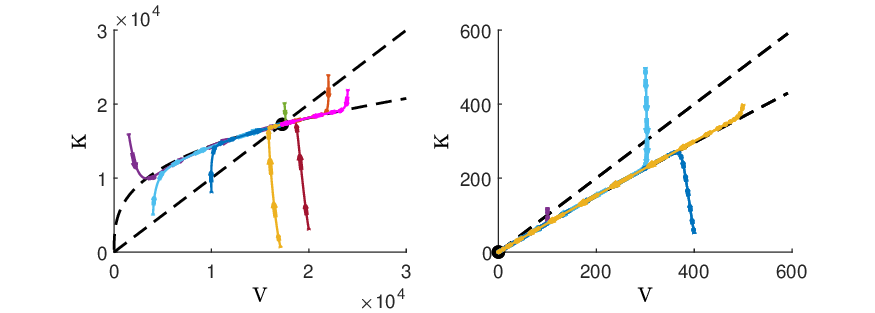}
    \caption{Trajectories for $\lambda_1>0$, $b > \lambda_2$ and  $b \leq \lambda_2$ in the left and right graphs, respectively. The  parameter values are those of  Table~\ref{tablaparametros1} except that for the right one,  $\lambda_2=1.7$. }
    \label{fig:Grafica_lambda1pos}
\end{figure}

Now, let us study the less common case when $\lambda_1$ is negative which will be of interest for dealing with cytotoxic treatments:

\begin{thrm}
Let us assume that $\lambda_1 \in \mathbb{R}^-,$ $b,d \in \mathbb{R}^+$ and $\lambda_2 \in \mathbb{R}^+_0$. Given any initial condition $(V_0,K_0) \in (\mathbb{R}^+)^2$ (different from the critical point $(V_c,V_c)$ when $b > \lambda_2$), the Cauchy problem (\ref{EVK}) has a unique solution 
$(V,K) \in (C^\infty(\mathbb{R}^+_0))^2$.
Moreover, $(V(t),K(t)) \in (\mathbb{R}^+)^2$ for each $t \in \mathbb{R}^+_0$ and it is satisfied one of the following possibilities:
    \begin{itemize}
    \item[$a)$] $\displaystyle \lim_{t \to +\infty} V(t) =0$, or
    \item[$b)$] $\displaystyle \lim_{t \to +\infty} (V(t), K(t)) = (+\infty,+\infty)$.
    \end{itemize}
\label{T2}
\end{thrm}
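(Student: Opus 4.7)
I would split the proof into two parts: well-posedness (existence, uniqueness, positivity and global extension of the solution) and the asymptotic dichotomy.

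For well-posedness, the right-hand side of (\ref{EVK}) is $C^\infty$ on the open first quadrant, so Picard--Lindel\"of gives a unique local $C^\infty$ solution. Positivity of $K$ is immediate from $K'|_{K=0} = bV > 0$. For positivity of $V$, I would use the substitution $L = \log V$, which converts the first equation into the linear ODE $L' = -\lambda_1(L - \log K)$; its inhomogeneous term is bounded on any compact time interval where $K$ stays bounded and bounded away from $0$, so $V$ cannot vanish in finite time. Global existence then follows by combining $K' \le bV$ with a Gronwall argument on $L$, ruling out finite-time blow-up of either component.

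The heart of the proof is the dichotomy, which I would establish by phase-plane analysis. The two nullclines are the diagonal $V = K$ and the strictly increasing curve $K = k(V) := bV/(\lambda_2 + dV^{2/3})$. Since $\lambda_1 < 0$, one has $V' > 0$ iff $V > K$, and $K' > 0$ iff $K < k(V)$. When $b > \lambda_2$ the nullclines meet in the interior at $(V_c,V_c)$; otherwise they meet only at the origin. In either case they partition $(\mathbb{R}^+)^2$ into regions on which the signs of $(V',K')$ are constant, and there are two \emph{monotone} regions: an ascending one $\{V > K,\ K < k(V)\}$ (both coordinates strictly increasing) and a descending one $\{V < K,\ K > k(V)\}$ (both strictly decreasing). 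My next step would be to show that every trajectory is eventually trapped in one of these two monotone regions by tracing how it must cross each piece of the nullclines: in a ``transit'' region, the coordinate with constant-sign derivative drives the orbit monotonically, and a sign check on the vector field along each nullcline piece shows that entering a monotone region is irreversible.

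Once trapped in the ascending region, $(V,K)$ is monotone and either unbounded (giving case (b)) or convergent to a critical point. When $b \le \lambda_2$ there is no interior critical point, so the trajectory is automatically unbounded, giving (b). When $b > \lambda_2$, the only candidate is $(V_c,V_c)$, whose linearization has determinant $(2\lambda_1/3)(b - \lambda_2) < 0$, so it is a saddle; apart from its one-dimensional stable manifold, trajectories in the ascending region escape to infinity. In the descending region the trajectory is monotonically decreasing, and combining the monotone limit with the $K$-equation forces $V \to 0$, i.e.\ case (a). Periodic orbits are then excluded by Bendixson--Dulac once the phase portrait is understood. I expect the main technical obstacle to be the transit-region step: one needs quantitative estimates combining the monotonicity of one coordinate with an explicit rate bound for the other to guarantee the trajectory reaches a nullcline in finite time; a secondary subtlety is handling the saddle separatrix at $(V_c,V_c)$, which is a lower-dimensional exceptional set whose generic complement produces the stated dichotomy.
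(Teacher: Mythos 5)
Your overall strategy coincides with the paper's: both arguments rest on a phase--plane analysis of (\ref{EVK}) with the nullclines $K=V$ and $K=\gamma(V)=bV/(\lambda_2+dV^{2/3})$, together with comparison estimates to rule out finite-time vanishing or blow-up (your logarithmic substitution plus $K'\le bV$ is essentially the paper's comparison argument). The extras you propose --- the linearization at $(V_c,V_c)$ showing it is a saddle, Bendixson--Dulac, and the explicit remark that the one-dimensional stable manifold of the saddle is an exceptional set on which neither $(a)$ nor $(b)$ holds --- are refinements the paper does not spell out, and the last one is a legitimate observation.

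There is, however, a genuine error in the step you yourself single out as the heart of the proof. Neither of your two ``monotone'' regions is forward invariant, so the claim that ``entering a monotone region is irreversible'' fails, and with it the classification you build on it (descending region $\Rightarrow$ case $(a)$, ascending region $\Rightarrow$ case $(b)$). Concretely, on the diagonal piece $\{K=V,\ V>V_c\}$ the vector field equals $\bigl(0,\,V(b-\lambda_2-dV^{2/3})\bigr)$, whose second component is strictly negative there; hence a trajectory in the descending region $Q_1=\{K>\max\{V,\gamma(V)\}\}$ that reaches this piece crosses \emph{out} of $Q_1$ into $Q_3=\{\gamma(V)<K<V\}$, then into the ascending region $Q_4$, after which $V,K\to+\infty$. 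Such an initial point lies in your descending region yet belongs to case $(b)$, not case $(a)$. Symmetrically, on $\{K=V,\ V<V_c\}$ the field points straight up, so trajectories leave the ascending region $Q_4$ into $Q_2$ and then into $Q_1$, where they may well end in case $(a)$. This transition structure ($Q_1\to Q_3\to Q_4$ and $Q_4\to Q_2\to Q_1$, with any return to $Q_4$ occurring at $V>V_c$ and therefore being final) is precisely what the paper's proof tracks; without it the asymptotic behaviour cannot be read off from which monotone region the orbit currently occupies, and a whole open set of initial data gets mislabelled. To repair your argument you would need to replace ``irreversibility of the monotone regions'' by the correct sign check on each nullcline piece and follow the resulting one-way circulation among the four regions, which is exactly the paper's route.
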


\begin{proof}
First, the existence and uniqueness of a local solution is a straightforward consequence of any classical theorem about the topic, taking into account that the functions defining the ODEs of system (\ref{EVK}) are $C^1$ in the domain $(\mathbb{R}^+)^2$. After extending (on the right) this solution as far as possible, let us denote by $[0,T)$ its maximal interval of existence. Since we are dealing with a dynamical system, we will look at the phase plane $(V,K)$ and the nullclines $K=V$ and $K = \gamma(V)$, where $\gamma(V) = bV/(\lambda_2+dV^{\frac{2}{3}})$.

We will distinguish two cases: 
\begin{itemize}
\item [i)] When $b > \lambda_2$, the nullclines intersect at the critical point $(V_c,V_c)$, dividing the first quadrant of the phase plane into four regions: 
\[ Q_1 =\{(V,K) \in (\mathbb{R}^+)^2 : \max\{V,\gamma(V)\} < K\}, \]
\[ Q_2 =\{(V,K) \in (\mathbb{R}^+)^2 : V < K < \gamma(V)\}, \]
\[ Q_3 =\{(V,K) \in (\mathbb{R}^+)^2 :  \gamma(V) < K < V\}, \]
\[ Q_4 =\{(V,K) \in (\mathbb{R}^+)^2 : K < \min\{V,\gamma(V)\}\}, \]
together with their boundaries.

By observing the signs of $V'(t)$ and $K'(t)$ in each region, we can easily deduce how the trajectories of the solutions evolve over time. For an initial point in $Q_1$ the corresponding solution will move to the left and down and there are two possibilities: $V(t)$ will tend to $0$ as $t \rightarrow T^-$ (and simultaneously, $K(t)$ will tend to some unknown $K_f > 0$) or the trajectory will enter from $Q_1$ to $Q_3$ and $V(t)$ and $K(t)$ will tend to $+\infty$ as $t \rightarrow T^-$, because  trajectories in $Q_3$ always move to the right and must pass to  $Q_4$. For initial points in $Q_2$, the corresponding solution will move to the left and up and once the solution touches region $Q_1,$ $V(t)$ will tend to $0$ as $t \rightarrow T^-$, unable to return to $Q_2$. Inside $Q_4$ any trajectory moves to the right and up. So, there are two options: it reaches $Q_2$ and later on $Q_1$ and then it behaves as we have explained before or $V$ and $K$ tend to $+\infty$  without being able to leave $Q_4$. The graph  on the left of Figure~\ref{fig:Grafica_lambda1neg} displays some trajectories in different cases.

\begin{figure}
    \centering
    \includegraphics[width=0.9\textwidth]{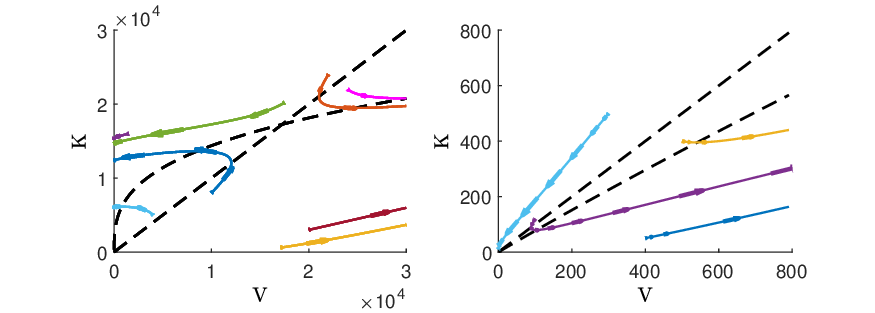}
    \caption{Trajectories for $\lambda_1=-2<0$, $b > \lambda_2$ and  $b \leq \lambda_2$ in the left and right graphs, respectively. The  initial points and the parameter values are those of  Figure~\ref{fig:Grafica_lambda1pos} except for $\lambda_1$.}
    \label{fig:Grafica_lambda1neg}
\end{figure}

In this case we can show that $(V_c,V_c)$ is an unstable critical point for  system (\ref{EVK}).

\item [ii)] When $b \leq \lambda_2$, the first quadrant of the phase plane is divided just in three regions and their boundaries, because now the nullclines do not intersect. They are:  
\[ \tilde{Q}_1 =\{(V,K) \in (\mathbb{R}^+)^2 : V < K \}, \] 
\[\tilde{Q}_2 =\{(V,K) \in (\mathbb{R}^+)^2 :  \gamma(V) < K < V\} \ \ \mbox{and} \]
\[ \tilde{Q}_3 =\{(V,K) \in (\mathbb{R}^+)^2 : K < \gamma(V)\}. \]

Again, we can easily deduce how the trajectories of the solutions evolve over time. For an initial point in $\tilde{Q}_1$ the corresponding solution will move to the left and down and again there are two possibilities: $V(t)$ will either tend to $0$ as $t \rightarrow T^-$ or the trajectory will change to $\tilde{Q}_2$ and $V(t)$ will tend to $+\infty$ as $t \rightarrow T^-$, because the trajectories in $\tilde{Q}_2$ always move to the right and down and, after passing into $\tilde{Q}_3$, they will continue to move to the right and up without leaving this region. Therefore $V$ must tend to $+\infty$. The graph on the right of Figure~\ref{fig:Grafica_lambda1neg} shows some trajectories in different cases.\end{itemize}

It remains to prove that $T = +\infty$. To that end, let us observe that when $0 < K_{min} \leq K(t) \leq K_{max} $ and $V(t) > 0$, 
since $\lambda_1 < 0,$ we derive
\[  - \lambda_1 V(t) \log \left( \frac{V(t)}{K_{max}} \right) \leq  - \lambda_1 V(t) \log \left( \frac{V(t)}{K(t)} \right) \leq  - \lambda_1 V(t) \log \left( \frac{V(t)}{K_{min}} \right).  \]
Using a classical comparison theorem for the solution of the Cauchy problems corresponding to previous right hand sides, we arrive to 
\[ 0 < K_{max} \exp{\left(\log\left(\frac{V_0}{K_{max}}\right)\exp{(-\lambda_1 t)}\right)} 
\leq V(t) \leq \]
\[ \leq K_{min} \exp{\left(\log\left(\frac{V_0}{K_{min}}\right)\exp{(-\lambda_1 t)}\right)}.\]
Above inequalities imply that $V$ can neither be zero nor have a vertical asymptote for a finite time $T$, when $K$ is bounded at $T$. Furthermore, the possibility that both $V$ and $K$ tend to $+\infty$ at the same finite $T$, can be excluded taking into account that it should be satisfied $V(t) > K(t) > 1$ when $t \in [\tilde{T},T),$ for a certain time $\tilde{T} < T,$ then 
\[ V'(t) \leq  - \lambda_1 V(t) \log \left(V(t)\right), \ \ t \in [\tilde{T},T),  \]
and hence, 
\[ \log\left(\log{(V(t))}\right) \leq -\lambda_1(t-\tilde{T})+\log\left(\log{(V(\tilde{T}))}\right), \ \ t \in [\tilde{T},T). \]
Consequently, $T = +\infty$. 
All above possibilities can be summarized as in the statement of Theorem \ref{T2}.
\end{proof}

\begin{rmrk}
\begin{itemize} 
\item [i)] When $b > \lambda_2$ and $V_0 = K_0 = V_c$, it is clear that $(V(t),K(t)) \equiv (V_c,V_c)$.
\item [ii)] According to our model, it follows that the tumor will never disappear completely, from a mathematical point of view.
\end{itemize} 
\end{rmrk}

\section{Constant drug effects}

For simplicity, we will first consider the  system (\ref{EVKwT}) with constant values for the drug effects: $E_c (t) \equiv E_c \in \mathbb{R}^+_0$ and $E_a(t)  \equiv E_a \in \mathbb{R}^+_0$. Hence, we can apply the conclusions of Theorems \ref{T1} and \ref{T2} to this situation directly, with 
$\lambda_1 (1-E_c)$ taking the place of $\lambda_1$ and $\frac{b}{1+E_a}$ instead of $b$. Consequently, the new value for the critical point becomes $(\hat{V_c},\hat{V_c})$ with $\hat{V_c}=\left( \dfrac{b-\lambda_2(1+E_a)}{d (1+E_a)} \right)^\frac{3}{2}$. 

Therefore, when $E_c < 1$, Theorem \ref{T1} with $\lambda_2 \neq 0$ gives the following results: 
\begin{itemize}
    \item[a)] If $0 \leq E_a < (b-\lambda_2)/\lambda_2$, then $\displaystyle \lim_{t \to +\infty} \left(V(t),K(t)\right) = ( \hat{V}_c,\hat{V}_c)$.
    
    \item [b)] If $E_a \geq \max\{(b-\lambda_2)/\lambda_2,0\},$ 
then $\displaystyle \lim_{t \to +\infty} \left(V(t),K(t)\right) = \left(0,0\right).$
\end{itemize}

Again, previous case $a)$ can be interpret as the tumor volume stabilizes around $\hat{V}_c$.  Note that $\hat{V}_c \leq V_c$ and it is smaller the higher $E_a$. This option can be of interest for palliative treatments. Concerning case $b)$ (where the tumor is theoretically being eliminated), let us point out that the required antiangiogenic drug concentration $c_a$ could be very high and not achievable in the practice. 

Alternatively, when $E_c > 1$, applying Theorem \ref{T2} we arrive to one of the following opposite possibilities: 
\[ \displaystyle \lim_{t \to +\infty} V(t) = 0 \ \ \mbox{or} \displaystyle \lim_{t \to +\infty} V(t) = + \infty, \]
assuming $(V_0,K_0) \neq (\hat{V}_c,\hat{V}_c)$. This information does not clarify the situation for each particular case. Hence, it is natural to classify the initial conditions according to the behavior of the corresponding solution of system (\ref{EVKwT}). In what follows, we will focus on the zone where $V \in (0,V_c),$ because this is the area of medical interest, assuming that when $V \geq V_c$ the patient will have a fatal outcome.
In the case where $V$ tends to $0$ as $t \rightarrow + \infty$ (without touching $V_c$) we will say that $(V_0,K_0)$ is a normal point for system (\ref{EVKwT}) with the given cytotoxic  therapy  because it is working satisfactorily (as desired), while in the case where $V$ tends to $+\infty$ as $t \rightarrow + \infty$ or it reaches $V_c$ at some finite time, we will say that $(V_0,K_0)$ is an abnormal point for the system (\ref{EVKwT}) with that therapy, because it is working the opposite of expected. Let us emphasize that the above definitions apply in the case of constant effects. Generally speaking, we will say that an initial point is {\bf abnormal} for a given treatment (or combination of treatments) when the tumor grows more after its application over a long period of time than without any treatment.  Otherwise, we will say that it is  {\bf normal} for that therapy.

In Figure~\ref{Grafica_REGIONES1} we show the normal region (in cyan) and the abnormal region (in red) inside the rectangle $0 < V, K \leq 16850$ (recall that $V_c = 17269$), obtained for therapies with constant effects $E_c = 1.5$ and $E_c = 1.77$, in both cases together with $E_a = 0$. 
First, we check the crucial fact that there exist abnormal initial points with $V_0 \in (0,V_c)$.
Second, that these regions depend on the parameter values and change when considering different cytotoxic drug effect. As we can see, the abnormal zone seems to grow with $E_c$.  
Let us emphasize that the boundary between the normal and abnormal regions is quite blurry and its precise contour will depend on the mesh used to discretize the rectangle. 

\begin{figure}
    \centering    \includegraphics[width=0.9\textwidth]{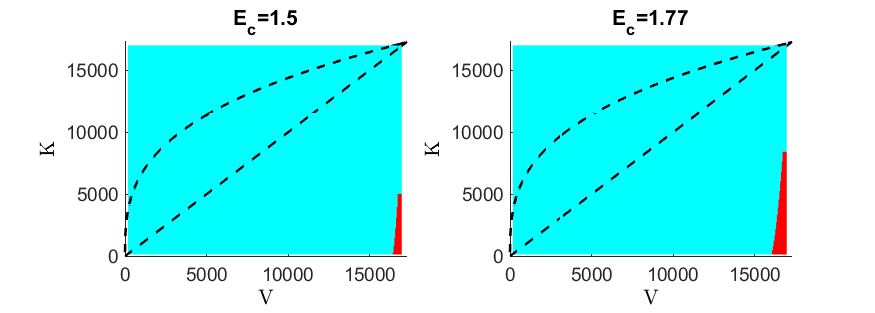}
    \caption{Initial points leading to $V(t) \longrightarrow 0$ in cyan and those for which $V(t) \longrightarrow +\infty$ in red. The  parameter values are those of Table~\ref{tablaparametros1} with $E_a=0$.}
    \label{Grafica_REGIONES1}
\end{figure}

Considering the biology of the tumor-vasculature system, it seems difficult to imagine any reason why this system could ever be in the abnormal zone, even simply below the diagonal $K=V$. It is true that under normal conditions, the tumor will have a smaller volume than the surrounding vasculature in order to be supported by it, but also that in certain special cases (e.g. after some previous treatment) this may be not the case (see for example \cite[Figure 10]{Donofrio-Gandolfi2010}).

\section{Piecewise constant drug effects}

Let us show that very different situations can occur when treating the same tumor with different piecewise constant therapies, by means of some simple examples with the following  parameters taken from \cite{Poleszczuk-etal-2015}: 

\begin{table}[ht]
    \centering
    \begin{tabular}{|c|c|c|c|}
    \hline
    $\lambda_1$ &$\lambda_2$ & $b$ &$d$  \\
    $day^{-1}$$ $& $day^{-1}$$ $ & $day^{-1}$ & $day^{-1}mm^{-2}$  \\
    \hline $0.0741$&$0.0021$&$1.3383$&$0.002$   \\
         \hline
    \end{tabular}
    \caption{Parameter values for the ODE system (\ref{EVKwT}).}
    \label{tablaparametros1}
\end{table}

Two initial conditions have been chosen for our numerical experiments:

\begin{itemize}
    \item [$a)$] With the initial condition $(V_0,K_0) = (16300,1000)$ (both in  $mm^3$, denoted by $(P_1)$ in the sequel) we have considered the following cases:
    \begin{itemize}
    \item [$a_1)$] If no therapy is applied ($(E_c,E_a) \equiv (0,0)$), the  tumor volume at $t = 30$ days will be $V(30) \approx 16701 \ mm^3$.  Moreover, $V(t) \rightarrow V_c = 17269 \ mm^3$ as $t \rightarrow + \infty$, as expected a priori, according to Theorem \ref{T1}$-a)$. On the left side of Figure~\ref{fig:EXP10Ec0Ea}) we see the evolution of the untreated tumor from $(P_1)$. It may be surprising that in the first few days the tumor shrinks. This is because the volume of the surrounding vasculature is too small to support the current tumor volume. In the right part of the figure, starting from a vasculature slightly larger than the tumor volume, this  initial shrinkage no longer appears.

\begin{figure}[htp]
    \centering
    \includegraphics[width=4.3in]{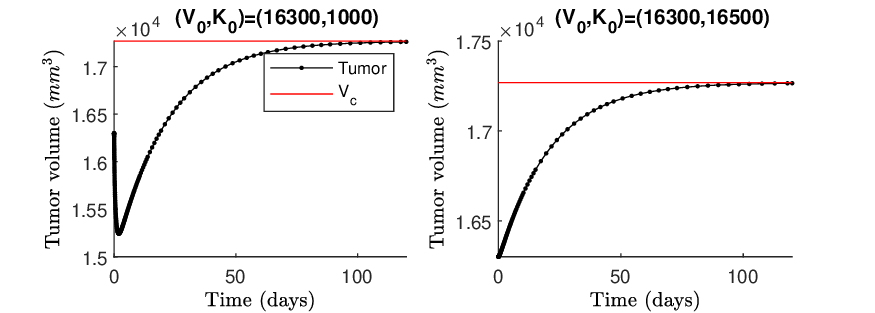}
    \caption{Evolution of the untreated tumor size, using the parameter values of  Table~\ref{tablaparametros1}, for the initial points $(P_1)$ and $(P_2)$ in the left and right graphs, respectively.}
    \label{fig:EXP10Ec0Ea}
\end{figure}

    \item [$a_2)$] If the particular cytotoxic treatment  $(E_c,E_a) \equiv (1.5,0)$ is applied, it can be seen in the plot of the top left  of Figure~\ref{fig:EXP1cteEc0Ea} that "the tumor volume may even increase at the initial stages of treatment", as noted in \cite{Castorinaetal2022}. Again, this appears to be due to the presence of dead material within the tumor. After one month, $V(30) \approx \ 16626 \ mm^3$, slightly lower than in the case of the untreated tumor. Keeping the same therapy, the tumor tends to  disappear: for instance, $V(250) \approx  0.31 \ mm^3$, see Theorem \ref{T2}$-a)$.  

\begin{figure}[htp]
    \centering
    \includegraphics[width=4.3in]{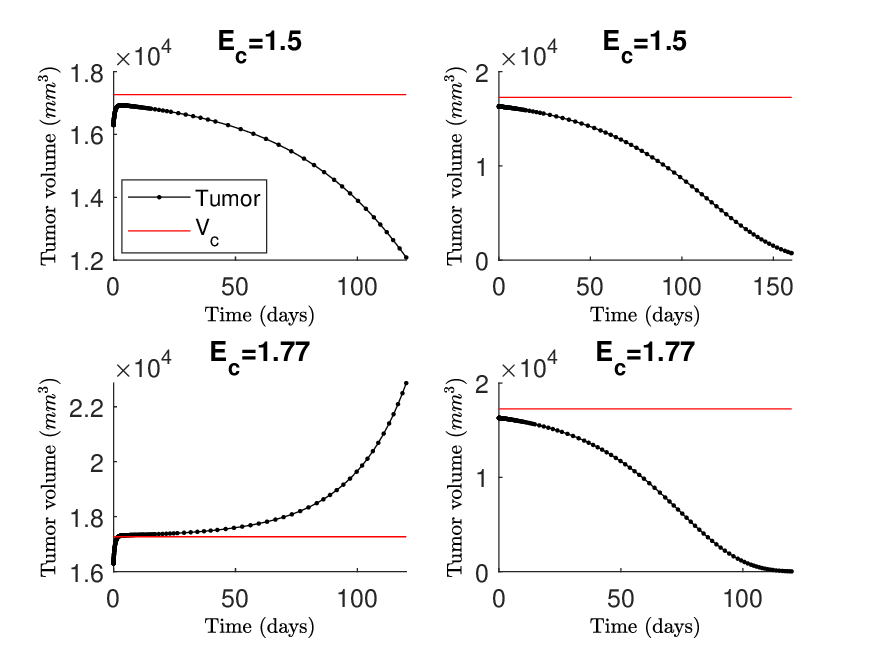}
    \caption{Some results for constant drug effects with $E_a = 0$, parameter values of Table~\ref{tablaparametros1}, with the initial condition $(P_1)$ on the left and  $(P_2)$  on the right.}
    \label{fig:EXP1cteEc0Ea}
\end{figure}

    \item [$a_3)$] Naively arguing, we could try to speed up tumor shrinkage by applying a stronger cytotoxic treatment, such as $(E_c,E_a) \equiv (1.77,0)$ at the bottom   of Figure~\ref{fig:EXP1cteEc0Ea},  but paradoxically the plot on the left  shows that $V(t) \rightarrow +\infty$ as $t \rightarrow + \infty$, displaying that $(P_1)$ is an abnormal point for this level of therapy (see Theorem \ref{T2}$-b)$). In fact, the tumor volume will reach the fatal level $V_c$ on the second day of treatment and therapy is supposed to end at that time.

   \item [$a_4)$] One way to prevent the above situation is to delay the start of cytotoxic therapy, when this is possible. For example, in the plot on the left of Figure \ref{fig:EXP12cetapasOSOLOa} we can see the evolution of tumor volume with the strategy
\[ (E_c(t),E_a(t)) \equiv \left\{\begin{array}{rl} (0,0), & \mbox{ when } t \in [0,2), \\ (1.77,0), & \mbox{ when } t \in [2,30],\end{array}\right. \]
that causes, after $30$ days, $V(30) \approx 12087 \ mm^3$.

 \begin{figure}[htp]
    \includegraphics[width=4.3in]{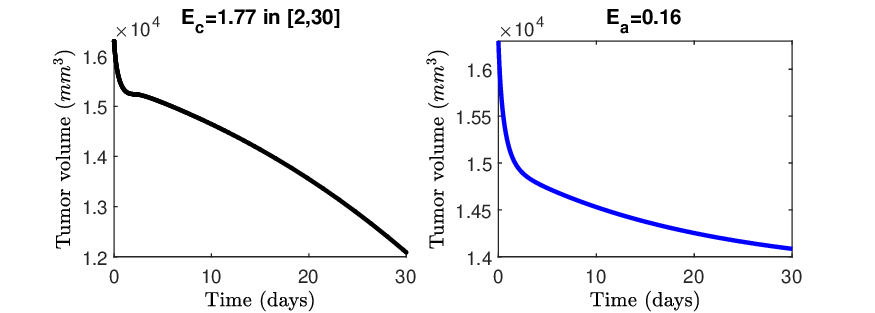}
    \caption{Evolution of tumor size  using $(P_1)$ and   values of  Table~\ref{tablaparametros1} for cytotoxic treatment with initial delay on the left and only antiangiogenic treatment on the right.}
    \label{fig:EXP12cetapasOSOLOa}
\end{figure}
    
    \item [$a_5)$] Alternatively, an antiangiogenic treatment can be applied for some days to normalize the tumor: for instance, with $(E_c,E_a) \equiv (0,0.16)$, the  tumor volume at $t = 30$ days will be $V(30) \approx 14084 \ mm^3$ (see on the right plot  of 
    Figure~\ref{fig:EXP12cetapasOSOLOa}).   It is important to note that during this process  abnormal behavior will never appear since $E_c \equiv 0$.
\end{itemize}

 \item [$b)$] The graphics on the right column  of 
 Figure~\ref{fig:EXP1cteEc0Ea} show the results for the initial condition  $(V_0,K_0) = (16300,16500)$ (both in  $mm^3$, denoted by $(P_2)$ in the sequel).
 For both treatments ($(E_c,E_a) \equiv (1.5,0)$ on the top right and $(E_c,E_a) \equiv (1.77,0)$ on the bottom right) we have that $V(t) \rightarrow 0$ when $t \rightarrow + \infty,$ but the healing process is faster in the second case 
($V(120) \approx 26 \ mm^3$) than in the first ($V(120)
 \approx 5538 \ mm^3$), as one could expect. This means that $(P_2)$ is a normal point for these  therapies.
\end{itemize}

The basic idea to avoid paradoxical results consists of using some procedure to change the initial condition from abnormal to normal before applying the cytotoxic treatment. This can be achieved due to the unstable nature of the critical point $(V_c,V_c)$. As we have just seen in $a_4)$, a useful procedure may  simply be to allow  the tumor-vasculature system evolve without treatment from an abnormal initial point for a short period of time and then apply the cytotoxic treatment. 
There is also a known technique to improve the outcome of cytotoxic therapies that can be applied here.  It consists of first applying an antiangiogenic treatment to achieve a  transient normalization of the vasculature (as in $a_5)$), followed by cytotoxic treatment (see  \cite{Jain2001, Jain2005, Jain-etal-2011}). Here, it is crucial to determine the appropriate dosage of the antiangiogenic treatment. 

As indicated in the Introduction, there are several biological theories trying to explain the abnormal behavior of the tumor after the cytotoxic treatment. The main one is related with the debris generated by the therapies. Our macroscopic approach is compatible with this theory taking into account that during the tumor regression ``the volume dynamics may be influenced by the removal of the abundant necrotic material produced by cell death", see \cite{Donofrio-Gandolfi2004}.

\section{Time-dependent drug effects}

In practice, drug concentrations usually vary over time within the body, so it seems impossible to achieve constant (or piecewise constant) drug effects. Therefore, we have to deal with the case of time-dependent drug effects, that is, the system (\ref{EVKwT}) with piecewise continuous functions $E_c$ and $E_a$ depending on the  concentrations of the corresponding cytotoxic and antiangiogenic treatments. It is clear that jump discontinuities will occur at the times at which the drugs are administered. Let us start by establishing that the problem (\ref{EVKwT}) is also well-posed in this case.
As usual, we will denote by $W^{1,\infty}[0,T]$ the Sobolev space of all functions in $L^\infty(0,T)$ having first order weak derivative (in the distributional sense) also belonging to
$L^\infty(0,T)$. It is well known that $W^{1,\infty}[0,T]$ can be identified with $C^{0,1}[0,T]$, the space of Lipschitz continuous functions in $[0,T]$, after a possible redefinition on a set of zero measure, see \cite[Theorem 5, pg. 269]{Evans-1998}.

\begin{thrm}
Let us assume that $\lambda_1, b, d, T \in \mathbb{R}^+$ and $\lambda_2 \in \mathbb{R}^+_0$. Let $E_c, E_a$ be nonnegative piecewise continuous functions in $[0,T]$ with a finite number of jump discontinuities in that interval. Given any initial condition $(V_0,K_0) \in (\mathbb{R}^+)^2,$ then the Cauchy problem (\ref{EVKwT}) has a unique solution 
$(V,K) \in (W^{1,\infty}[0,T])^2$ and moreover $(V(t),K(t)) \in (\mathbb{R}^+)^2$ when $t \in [0,T].$
\label{T3}
\end{thrm}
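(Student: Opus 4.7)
The plan is to partition $[0,T]$ at the finitely many common discontinuities of $E_c$ and $E_a$ and build the solution inductively, concatenating the pieces at the end. Fix $0 = t_0 < t_1 < \cdots < t_n = T$ containing every jump of $E_c$ or $E_a$, so that both restrictions to $[t_i, t_{i+1}]$ are continuous and therefore bounded. Assuming the unique solution has been constructed on $[0, t_i]$ with $(V(t_i), K(t_i)) \in (\mathbb{R}^+)^2$, the right-hand side of (\ref{EVKwT}) on $[t_i, t_{i+1}]$ is continuous in $t$ and of class $C^1$ in $(V, K)$ on $(\mathbb{R}^+)^2$, so Picard--Lindel\"of produces a unique local solution with values in $(\mathbb{R}^+)^2$. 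Denoting by $[t_i, T^*)$ its maximal interval of existence, by the standard continuation theorem it suffices to prove that the orbit stays in a compact subset of $(\mathbb{R}^+)^2$ on this interval, which forces $T^* = t_{i+1}$.

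For the a priori bounds I would proceed in three linked steps. First, positivity is automatic: $V(t) = V(t_i)\exp\bigl(-\int_{t_i}^t \lambda_1(1-E_c)\log(V/K)\,ds\bigr) > 0$ wherever defined, while $K'(t) \geq -(\lambda_2 + dV^{2/3}(t))K(t)$ gives $K(t) \geq K(t_i)\exp\bigl(-\int_{t_i}^t(\lambda_2 + dV^{2/3})\,ds\bigr) > 0$ as long as $V$ stays finite. Second, $K'(t) \leq bV(t)$ turns any a priori bound $V \leq V_{\max}$ on $[t_i, t]$ into matching upper and lower bounds on $K$. Third, the key linear identity $(\log V)' = -\lambda_1(1-E_c)(\log V - \log K)$ yields via variation of constants a bound on $\log V$ in terms of $\log K$, which must be closed self-consistently against the previous step.

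The main obstacle is closing this loop, since the bound on $\log V$ depends on $\log K$ and vice versa. The crucial observation that breaks the circularity is that the equilibrium of the $K$-equation at frozen $V$ is $K_{eq}(V) = bV/[(1+E_a)(\lambda_2 + dV^{2/3})] \sim V^{1/3}$ for large $V$, so the quotient $\xi := K/V^{1/3}$ should stay trapped between two positive constants determined by the initial data and the bounds on $E_c, E_a$. This translates into $|\log(V/K)| \leq c_1 \log V + c_2$ along the trajectory, whence the identity above gives the differential inequality $V' \leq c_3 V \log V + c_4 V$. Solutions of such an inequality grow at most double-exponentially in $t$ and therefore remain finite on any bounded time interval, ruling out finite-time blow-up of $V$ and hence of $K$, so $T^* = t_{i+1}$.

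Finally, once $V$ and $K$ are bounded above and away from zero on each $[t_i, t_{i+1}]$ and $E_c, E_a$ are bounded, the right-hand side of (\ref{EVKwT}) is bounded there, so $V'$ and $K'$ are essentially bounded on $[0,T]$. Concatenating over the finite partition, $V$ and $K$ are continuous on $[0,T]$ with bounded weak derivative, i.e.\ $(V, K) \in (W^{1,\infty}[0,T])^2$, as required.
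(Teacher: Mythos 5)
Your outer scaffolding matches the paper's: partition $[0,T]$ at the finitely many jumps of $E_c,E_a$, apply Picard--Lindel\"of on each subinterval, extend to the whole subinterval via a priori bounds keeping the orbit in a compact subset of $(\mathbb{R}^+)^2$, concatenate, and read off $(W^{1,\infty})^2$ from boundedness of the right-hand side. The positivity of $K$ via $K'\geq -(\lambda_2+dV^{2/3})K$ and the upper bound via $K'\leq bV$ are also exactly what the paper uses. The problem is the step that actually carries the theorem: ruling out finite-time blow-up (and finite-time vanishing of $V$, which your one-sided inequality $V'\leq c_3V\log V+c_4V$ does not address). You reduce everything to the claim that $\xi:=K/V^{1/3}$ stays trapped between two positive constants, but you only motivate this by noting that the frozen-$V$ equilibrium of the $K$-equation scales like $V^{1/3}$; you never prove the trapping. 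Writing out the dynamics, $\xi'=V^{2/3}\bigl(\tfrac{b}{1+E_a}-d\xi\bigr)-\lambda_2\xi+\tfrac{\lambda_1(1-E_c)}{3}\,\xi\log\bigl(V^{2/3}/\xi\bigr)$, one sees that the restoring term of size $V^{2/3}$ dominates the sign-uncontrolled logarithmic term only when $V$ is large; for moderate or small $V$ (which can occur, since on the relevant trajectories $V$ need not be monotone) no invariant region follows without a separate case analysis, and the lower trapping bound is equally delicate when $V$ is not bounded away from $0$ a priori. Since this trapping is precisely the circularity you set out to break, the proof as written is incomplete at its central point.

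The gap is also avoidable, which is how the paper proceeds: instead of a global comparison of $K$ with $V^{1/3}$, split the phase plane. In the region $V<K$ one has $K'\leq bV\leq bK$, so $K$ (and hence $V$) grows at most exponentially; in the region $K<V$ with $K>1$ one has $0<\log(V/K)\leq\log V$, so $V'\leq \lambda_1\bigl(\|E_c\|_\infty-1\bigr)_+V\log V$, and integrating $\bigl(\log\log V\bigr)'\leq C$ gives a double-exponential bound exactly as at the end of the proof of Theorem~\ref{T2}. Together with your bounds $K_0e^{-t(\lambda_2+d\tilde{C}^{2/3})}\leq K(t)\leq K_0+b\tilde{C}t$ valid whenever $V\leq\tilde{C}$, this closes the loop with no invariant-region argument. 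If you want to keep your $\xi$-based route, you must actually prove the trapping (with explicit constants depending on $\|E_c\|_\infty$, $\|E_a\|_\infty$ and the initial data, and with the degenerate regimes $V$ large, $V$ near $0$, and $\xi$ near $0$ or $\infty$ each handled); as it stands, ``should stay trapped'' is an assertion, not an argument.
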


\begin{proof}
Let us denote by $\{t_1,\ldots,t_N\}$ the jump discontinuities points for $E_c$ or $E_a$ in $[0,T]$.
First, the local existence and uniqueness of a solution pair of (\ref{EVKwT}) defined in some interval $[0,h]$, with $h \in (0,t_1),$ is a consequence of any classical theorem for this topic, because clearly the functions defining the system (\ref{EVKwT}) are continuous in $[0,t_1)\times (\mathbb{R}^+)^2$ and $C^1$ with respect to $V$ and $K$ in the same domain. We can show that $h = t_1$ noticing that if $0 \leq V(t) \leq \tilde{C},$ $t \in [0,h],$ we have
\[ -(\lambda_2+d\tilde{C}^{2/3})K(t) \leq K'(t) \leq b \tilde{C}, \] and therefore
\[ 0 < K_0\exp{\left(-t(\lambda_2+d\tilde{C}^{2/3})\right)} \leq K(t) \leq b \tilde{C} t + K_0. \]
Hence, $V$ can not be zero at $h$ jointly with $K$. On the other hand, the possibility that both $V$ and $K$ tend to $+\infty$ at $h$ simultaneously can also be  excluded arguing as at the end of Theorem \ref{T2}, when $K < V$ and using that $E_c$ is bounded from above; in the region $V < K,$ it is a consequence of the fact that 
\[ K'(t) \leq b V(t) \leq b K(t). \]
Then, we can estimate the solution from above and from below, concluding that $V$ and $K$ exist in $[0,t_1]$. This reasoning can be followed for the other intervals $[t_i,t_{i+1}],$ $i = 1, \ldots,N-1$ and $[t_N,T]$, pasting the solution at the points $t_i$ and taking into account once more that they cannot be zero or tend to infinity at any finite time.\end{proof}

The behavior of the non-autonomous system trajectories (\ref{EVKwT}) will largely depend on the expression of the effect functions $E_c$ and $E_a$, but (of course) it could definitely be much more complex than in the case of constant effects, see 
 Figure~\ref{fig:Grafica_NAtrajectories}.

\begin{figure}
    \centering
    \includegraphics[width=0.9\textwidth]{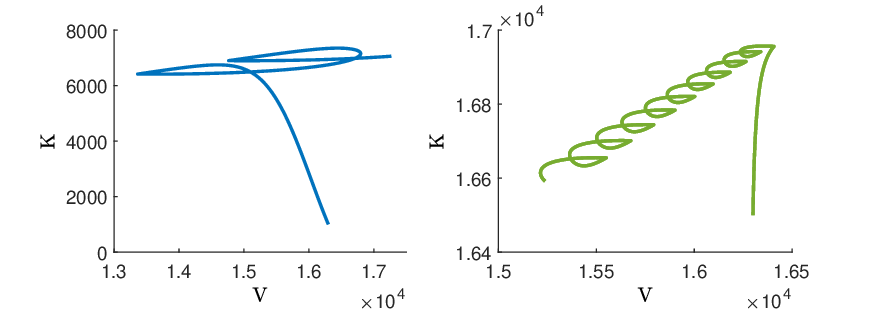}
    \caption{Trajectories with data from the second case  of Table~\ref{tablaresultados2} except that for the one on the left,  $\lambda_2=1.7$. The initial points are $(P_1)$ and $(P_2)$ on the left and right graphs, respectively.}
    \label{fig:Grafica_NAtrajectories}
\end{figure}

Now, let us present some numerical experiments related to the framework introduced in \cite{Fujita-etal-2007}. In that work, the antiangiogenic drug Bevacizumab was used in combination with the cytotoxic drug Paclitaxel to treat a head and neck squamous cell carcinoma on mice. Since it is known that Bevacizumab follows linear pharmacokinetics (PK), see \cite{Poleszczuk-etal-2015}, the simplest Cauchy problem can be used to describe the evolution of its concentration:
\begin{equation}
\left\{ \begin{array}{l}
c'(t) = -\lambda c(t) + \displaystyle\sum^{N}_{i=1}\sigma d_{i}\delta (t-t_{i}), \ \ \ t \in \ [0,T],\\
             c(0) = 0 .\\
             \end{array}
\right.
\label{pCauchy}
\end{equation}
The coefficient $\lambda$ is the clearance rate and is related to the half-life of the drug ($t_{1/2}$) by the expression $\lambda= log(2)/t_{1/2}$. The second term on the right hand of the equation (\ref{pCauchy})  depends on both the specific drug and the way it is administered. We assume that $N$ doses, $\{d_i\}_{i=1}^N$, will be given at $N$ dosage times, $\{t_i\}_{i=1}^N $, such that $ 0 \leq t_1 < \ldots < t_{N} < T,$
being $T$ the final observation time. The coefficient $\sigma$ is determined on the basis of drug-, patient- and tumor-specific parameters, such as the bioavailability of the drug (fraction of the administered drug that reaches the systemic circulation) and the patient's weight. Finally, $\delta(t-t_{i})$ denotes the Dirac delta distribution concentrated at $t_i$.

Concerning the cytotoxic drug, although Paclitaxel is known to  follow a nonlinear PK (\cite{Sparreboom-etal-1996}), some works have continued to use a linear model (see \cite{Benzekry-etal-2017}, \cite{Simeoni-etal-2004}). Keeping in mind that our purpose here is purely illustrative, for the sake of simplicity, we will also use the linear PK (\ref{pCauchy}). Parameters associated with the antiangiogenic drug will be indicated with the subscript ``$a$”, while those associated with the cytotoxic one will be indicated with the subscript ``$c$”.

To compare different treatments, the criterion ``Area Under the Curve" (AUC), which measures the degree of exposure to a drug, is often used. Mathematically, it can be calculated as follows:
\begin{equation}\label{AUC}
    AUC = \displaystyle \int_0^{T} c(t) \ dt.
\end{equation} Let us recall the explicit formulation for the concentration of each drug and AUC in the discrete case:
\begin{prpstn}
Under previous notations, the solution of the Cauchy problem \eqref{pCauchy} is given by
\begin{equation}
        c(t)=\left\{ \begin{array}{ll}
            0 ,& \hspace{0.3cm} \mbox{if}\ \  t \in [0,t_{1}),\\
            \sigma e^{-\lambda t}e^{\lambda t_{1}}d_{1} , & \hspace{0.3cm} \mbox{if}\ \ t \in [t_{1},t_{2}),\\
            \sigma e^{-\lambda t}(e^{\lambda t_{1}}d_{1} + e^{\lambda t_{2}}d_{2}) , & \hspace{0.3cm} \mbox{if}\ \ t \in [t_{2},t_{3}),\\
            \vdots & \hspace{0.3cm} \vdots\\
            \sigma e^{-\lambda t}(e^{\lambda t_{1}}d_{1} + \ldots +e^{\lambda t_{N-1}}d_{N-1}) , & \hspace{0.3cm} \mbox{if}\ \ t \in [t_{N-1},t_{N}),\\
            \sigma e^{-\lambda t}(e^{\lambda t_{1}}d_{1} + \ldots +e^{\lambda t_{N}}d_{N}) , & \hspace{0.3cm} \mbox{if}\ \ t \in [t_{N},T).\\
        \end{array}
    \right.
    \label{c(t)}
    \end{equation}
    Furthermore, \begin{equation} \label{AUCnocd}
     AUC = \dfrac{1}{\lambda}\sum_{i=1}^{N} \sigma d_i (1-e^{\lambda (t_{i}-T)}) .
\end{equation}
\end{prpstn}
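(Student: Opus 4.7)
The plan is to treat the Dirac impulses as jump conditions for an otherwise classical linear first-order ODE, solve on each interval between consecutive dosage times, and then paste the pieces together. In the sense of distributions (or, equivalently, by integrating \eqref{pCauchy} across an arbitrarily small neighborhood of $t_i$), the equation $c'(t) = -\lambda c(t) + \sigma d_i \delta(t-t_i)$ forces $c$ to be continuous off $\{t_1,\ldots,t_N\}$ and to satisfy $c(t_i^+) - c(t_i^-) = \sigma d_i$ at each dosage time.

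First I would set $c \equiv 0$ on $[0,t_1)$, which is immediate since the forcing vanishes and $c(0)=0$. Next I would argue by induction on $k \in \{1,\ldots,N\}$ that on $[t_k,t_{k+1})$ (with the convention $t_{N+1}=T$) the solution to the homogeneous equation $c'=-\lambda c$ with the updated initial value $c(t_k^+) = c(t_k^-)+\sigma d_k$ is
\[ c(t) = c(t_k^+)\, e^{-\lambda(t-t_k)}. \]
Unfolding the recurrence $c(t_k^+) = \sigma e^{-\lambda t_k}\sum_{i=1}^{k} e^{\lambda t_i} d_i$ yields exactly the piecewise formula \eqref{c(t)}. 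Uniqueness in each interval follows from the linearity of the homogeneous equation, and uniqueness of the jumps follows from the previous distributional identity.

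For the AUC, I would substitute \eqref{c(t)} into \eqref{AUC} and split the integral at the dosage times:
\[ AUC = \sigma \sum_{k=1}^{N} \left( \sum_{i=1}^{k} e^{\lambda t_i} d_i \right) \int_{t_k}^{t_{k+1}} e^{-\lambda t}\, dt = \frac{\sigma}{\lambda}\sum_{k=1}^{N} \sum_{i=1}^{k} e^{\lambda t_i} d_i \bigl(e^{-\lambda t_k} - e^{-\lambda t_{k+1}}\bigr). \]
Swapping the order of summation so that $i$ is outer and $k$ ranges from $i$ to $N$ telescopes the inner sum to $e^{-\lambda t_i}-e^{-\lambda T}$, which after multiplying by $e^{\lambda t_i}$ gives $1-e^{\lambda(t_i-T)}$ and hence \eqref{AUCnocd}.

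There is no real difficulty in this argument; the only mildly delicate point is giving a rigorous meaning to the solution at the jump times, which is handled by the distributional/jump-condition interpretation above. The rest is bookkeeping: the induction on intervals, and the index swap in the double sum for the telescoping of the AUC.
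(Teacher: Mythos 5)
Your proof is correct. Note, however, that the paper does not actually prove this proposition: it simply cites the literature (specifically \cite{Fernandez-Pola-SainzPardo2022}) for the piecewise expression \eqref{c(t)} and asserts that the AUC formula \eqref{AUCnocd} is a straightforward consequence. Your argument supplies the details the paper omits, and it does so in the standard way: interpreting each Dirac impulse as the jump condition $c(t_i^+)-c(t_i^-)=\sigma d_i$ (obtained by integrating \eqref{pCauchy} across a shrinking neighborhood of $t_i$), solving the homogeneous equation $c'=-\lambda c$ on each inter-dose interval, and unfolding the resulting recurrence. The AUC computation is also right: splitting $\int_0^T c$ at the dosage times, exchanging the order of the double sum, and telescoping $\sum_{k=i}^{N}\bigl(e^{-\lambda t_k}-e^{-\lambda t_{k+1}}\bigr)=e^{-\lambda t_i}-e^{-\lambda T}$ yields exactly $\frac{\sigma}{\lambda}\sum_{i=1}^{N} d_i\bigl(1-e^{\lambda(t_i-T)}\bigr)$. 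The only cosmetic caveat is that your telescoping argument implicitly uses the convention $t_{N+1}=T$ and assumes $t_1\ge 0$ with the integrand vanishing on $[0,t_1)$, both of which you state; so the derivation is complete and, if anything, more self-contained than what the paper offers.
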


\begin{proof}
Expression (\ref{c(t)}) is well known in the literature (see for instance \cite{Fernandez-Pola-SainzPardo2022}). Formula (\ref{AUCnocd})  for AUC is just a straightforward consequence of it. \end{proof}

For simplicity and convenience, the administration times will be taken equispaced ($ \Delta = t_{i}-t_{i-1}$ for $i= 2,\ldots, N$, starting at $t_1 = 0$ or $t_1 > 0$, depending on the case), and all the doses equal ($d_i=\overline{d},$ for $i=1,\ldots, N$). 

In what follows, we will use the expressions (\ref{EEC}) and (\ref{EECA}) for the drug effects $E_c$ and $E_a$, with $c_c(t)$ and $c_a(t)$ denoting the concentrations of the cytotoxic and antiangiogenic drug, respectively.
Moreover, $\Delta_c, \overline{d}_c$ and $N_c$ will denote the interval  between dosage times, the individual dose and the number of cytotoxic doses (resp. $\Delta_a, \overline{d}_a$ and $N_a$ for the antiangiogenic drug).

To compare the results of the model with constant drug effect and the discrete treatment (see Table \ref{tablaresultados1}  and Figure~\ref{fig:EXP4cteDISCRETeffect} ), using the Mean Value Theorem for integrals, we will identify  
\begin{equation}
E_a = \frac{e_a}{T}\int_0^T c_a(t)  dt = \frac{e_a}{T}AUC_a = \dfrac{\sigma_a e_a \overline{d}_a}{T\lambda_a}\left(N_a-\sum_{i=1}^{N_a} e^{\lambda_a (t_{i}-T)}\right),
\label{Ea}
\end{equation}
see (\ref{EECA}) and (\ref{AUCnocd}); also, 
\begin{equation}
E_c = \frac{k_1}{T}\int_0^T \frac{c_c(t)}{k_2+c_c(t)}  dt.
\label{Ec}
\end{equation}
More precisely, we will manage the approximated  discrete version of this identity, 
\begin{equation}
E_c =  \frac{N_c k_1}{\lambda_c T} \log\left(1 + \frac{\bar{d}_c \sigma_c}{k_2}\right),
\label{EcAp}
\end{equation}
that holds under the hypothesis (see \cite{Fernandez-Pola-SainzPardo2022})  
\begin{equation} 
\sigma_c \bar{d}_c e^{-\lambda_c \Delta_c} \ll k_2. 
\label{Hyp}
\end{equation} 

For the cytotoxic discrete treatment we have utilized the parameters described in Table \ref{tablaparametros1}, together with the ones included in Table \ref{tablaparametros2}, just for illustrative purposes, following this procedure:
\begin{itemize}
\item [i)] The clearance rate parameter $\lambda_c$ is taken from \cite{Benzekry-etal-2017}. 
\item [ii)] $\sigma_c = F \cdot w/V_d,$  where $F=0.31$ is the bioavailability of Paclitaxel (\cite{Gelderblom-etal-2002}), $w=1/50 \ kg$ is the average weight of a mouse and $V_d = 18.5 \ ml$ is the Paclitaxel volume of distribution in this framework (\cite{Benzekry-etal-2017}). Hence, $\sigma_c = 0.000335$ $kg/ml$.
\item [iii)] For $k_2$ we have taken into account that ``the estimated EC50 value is very low compared to the peak plasma concentration of paclitaxel,...according to the PK model", see \cite{Mo-etal-2014}, and we have used the same percentage cited there for an average dose of $10 \ mg/kg$: so, $k_2 = 0.022$ $mg/ml$.
\item [iv)] Looking at (\ref{EcAp}) and noticing that we can write  
\begin{equation}
\frac{c_c(t)}{k_2+c_c(t)}  =  \frac{\tilde{c}_c(t)}{\frac{k_2}{\sigma_c}+\tilde{c}_c(t)},
\label{E100}
\end{equation}
where $\tilde{c}_c$ is the concentration corresponding to the cytotoxic drug with $\sigma_c=1$, numerically it is more convenient to handle the ratio $k_2/\sigma_c$ as a single parameter rather than managing $k_2$ and $\sigma_c$ separately. 

\item [v)] Coefficient $k_1$ can be derived from  
(\ref{EcAp}), once the constant $E_c$ value is obtained by qualitatively fitting our model (\ref{EVKwT}) (with constant effect) to the experimental data from \cite{Fujita-etal-2007}.  
\end{itemize}

\begin{table}[ht]
    \centering
    \begin{tabular}{|c|c|c|}
    \hline
     $\lambda_c$  & $k_1$ & $k_2/\sigma_c$ \\
   $day^{-1}$ & & $mg/kg$ \\
   \hline $5.55$& $115.12$ & $65.67$\\
   \hline
    \end{tabular}
    \caption{Parameters for the cytotoxic drug.}
    \label{tablaparametros2}
    \end{table}

Let us stress the different results for the therapies, depending on the initial point. Note that the hypothesis (\ref{Hyp}) is satisfied in all cases and the total dose administered during a treatment is equal to $160$ $mg/kg$. Final results are reported at $t = 31$ days, although the treatment is administered over $[0,30]$, to include the effect until more than $99.6\%$ of the cytotoxic drug has been eliminated from the body. Recall that $(P_1)$ refers to $(V_0,K_0) = (16300,1000)$ and $(P_2)$ to 
$(V_0,K_0) = (16300,16500)$.

\begin{table}[ht]
    \centering
    \begin{tabular}{|c|c|c|c|c|c||c|c|c|}
    \hline
     \multicolumn{6}{|c||}{Discrete treatment}  & \multicolumn{3}{c|}{Constant effect treatment}  \\
     \hline
    \multicolumn{3}{|c|}{}  & \multicolumn{2}{c|}{$(P_1)$} & \multicolumn{1}{c||}{$(P_2)$}  & \multicolumn{1}{|c|}{}  & \multicolumn{1}{c|}{$(P_1)$} & \multicolumn{1}{c|}{$(P_2)$}  \\
    \hline
    \hline
         $\Delta_c$  & $N_c$   & $\overline{d}_c$  & $T$   & $V(T)$  & $V(31)$  &  $E_c$ & $V(31)$   &   $V(31)$   \\
          $day$    &      & $mg/kg$ & $day$ & $mm^3$  & $mm^3$  &        & $mm^3$    &   $mm^3$   \\
    \hline $5$     & $6$  & $26.67$ & $1$   & $V_c$   & $15707$ & $1.37$ &   $16429$ & $15613$  \\
    \hline $3$     & $10$ & $16$    & $1$   & $V_c$   & $15455$ & $1.46$ &   $16543$ & $15381$ \\
    \hline $2$     & $15$ & $10.67$ & $1$   & $V_c$   & $15294$ & $1.51$ &   $16630$ & $15235$ \\
    \hline $1$     & $30$ & $5.33$  & $1$   & $V_c$   & $15094$ & $1.57$ &   $16748$ & $15062$ \\
    \hline $0.5$   & $60$ & $2.67$  & $1$   & $V_c$   & $14985$ & $1.6$ &   $16822$ & $14963$ \\
    \hline $0.33$  & $90$ & $1.78$  & $31$ & $17154$  & $14954$ & $1.61$ &   $16850$ & $14928$ \\
         \hline
    \end{tabular}
        \caption{Results for the cytotoxic discrete treatment and constant drug effect starting at $t_1 = 0$.}
        \label{tablaresultados1}
\end{table}

\begin{figure}[htp]
    \centering
    \includegraphics[width=4.3in]{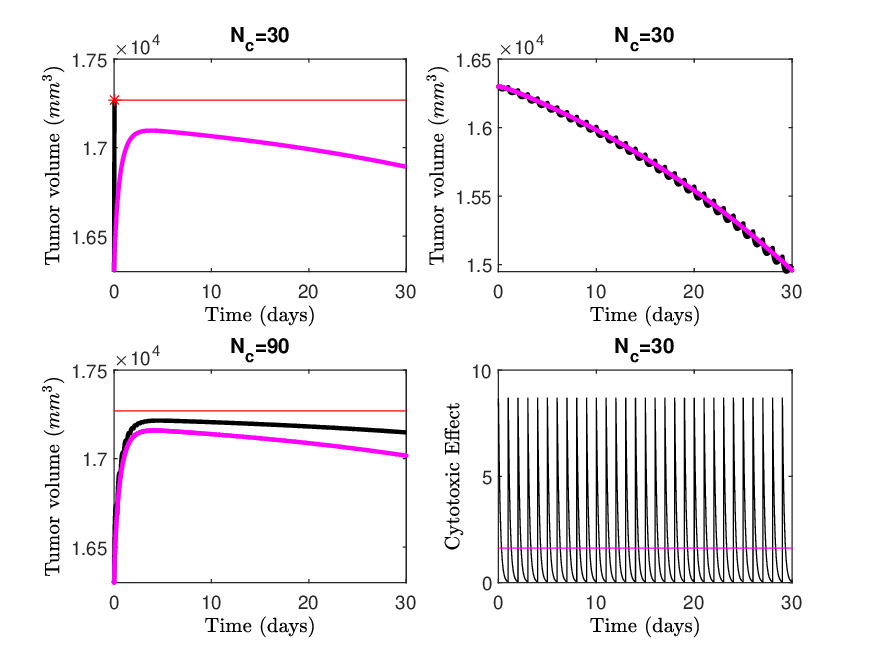}
    \caption{Results with cytotoxic discrete treatments (in black) and constant effect treatment (in magenta) of  
 Table~\ref{tablaresultados1} for the initial point $(P_1)$ on the left and $(P_2)$ on the right.}
    \label{fig:EXP4cteDISCRETeffect}
\end{figure}

\begin{table}[ht]   
    \centering
    \begin{tabular}{|c|c|c|c|c|c|c|}
    \hline 
     \multicolumn{4}{|c|}{} & \multicolumn{1}{c|}{$(P_1)$} & \multicolumn{1}{c|}{$(P_2)$}  \\
    \hline
    \hline
        $\Delta_c$  & $N_c$   & $\overline{d}_c$  & $Total \ dose$ & $V(31)$ & $V(31)$ \\
          $day$   &       & $mg/kg$ & $mg/kg$     & $mm^3$   & $mm^3$  \\
    \hline $5$    & $6$   & $26.67$ & $160$       & $13378$  & $15413$\\
    \hline $3$    & $10$  & $16$    & $160$       & $12956$  & $15238$\\
    \hline $2$    & $14$  & $10.67$ & $149.33$    & $13258$  & $15414$\\
    \hline $1$    & $28$  & $5.33$  & $149.33$    & $12968$  & $15290$\\
    \hline $0.5$  & $55$  & $2.67$  & $146.67$    & $12962$  & $15297$\\
    \hline $0.33$ & $82$  & $1.78$  & $145.78$    & $12971$  & $15304$\\
         \hline
    \end{tabular}
    \caption{Results for the cytotoxic discrete treatment starting at $t_1 = 3$.}
    \label{tablaresultados2}
\end{table}

 \begin{figure}[htp]
    \centering
    \includegraphics[width=4.3in]{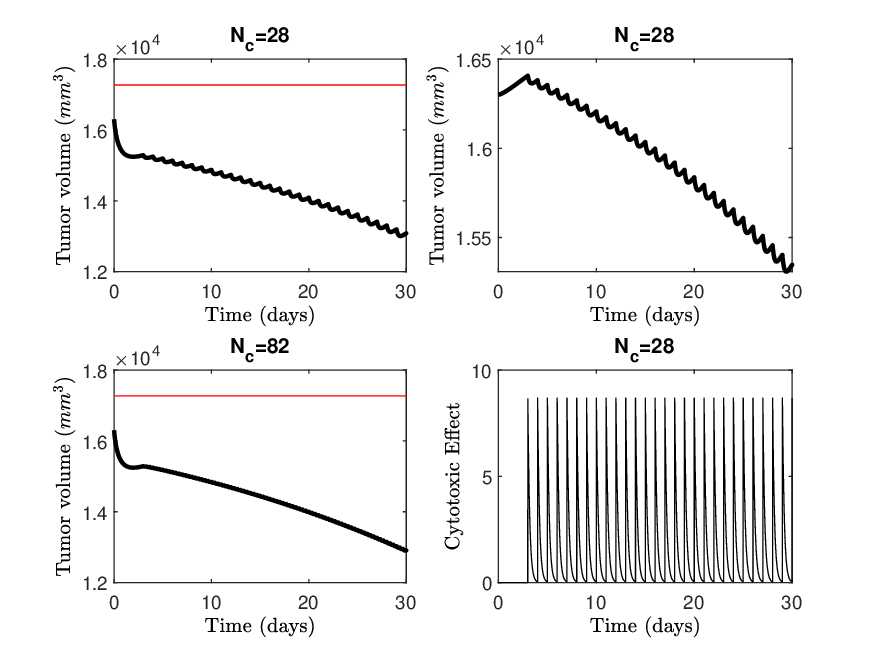}
    \caption{Results with cytotoxic discrete treatments  with three-day delay (in black)  of  
 Table~\ref{tablaresultados2} for the initial point $(P_1)$ on the left and $(P_2)$ on the right.}
    \label{fig:EXP4cteDISCRETeffectretraso}
\end{figure}

\begin{itemize}
    \item [i)] For the initial point $(P_1)$,  when the cytotoxic discrete treatments start at $t_1 = 0$,  the fourth and fifth columns of Table \ref{tablaresultados1} show that most of them can not be completed, because almost immediately (the first day) the tumor volume reaches the fatal level $V_c$ (in red in 
    Figure~\ref{fig:EXP4cteDISCRETeffect}). The exception is the case with $90$ doses, see  bottom left of 
    Figure~\ref{fig:EXP4cteDISCRETeffect}. Note that again we can see the paradoxical behavior that the tumor volume during cytotoxic treatment is larger than without any treatment (see item $a_1)$ of the previous section and the portion corresponding to the first $30$ days of the plot on the left of Figure~\ref{fig:EXP10Ec0Ea}). On the other hand, when the cytotoxic discrete treatments start at $t_1 = 3$ (see fifth column of Table \ref{tablaresultados2}),  surprisingly all the treatments are completed and the tumor volumes at the final time $T=31$ account for a greater reduction than $24\%$ of the volume of the $90$ dose case, even though the total dose is lower than $160$ $mg/kg$  (see fourth column of Table \ref{tablaresultados2}).  It seems that the waiting days before applying the first cytotoxic dose allows the vasculature of the system to normalize, at least partially. Table \ref{tablaresultados2} and  
    Figure~\ref{fig:EXP4cteDISCRETeffectretraso} consider  cytotoxic treatments corresponding to the same inter-dose spacing and same individual dose as in  
    Table~\ref{tablaresultados1} and  
    Figure~\ref{fig:EXP4cteDISCRETeffect}, now adjusting the number of doses to the interval $[3,30]$ to take into account the delay in the onset of treatments (delay that is evident in the fourth graph of 
    Figure~\ref{fig:EXP4cteDISCRETeffectretraso}). 
    
    \item [ii)] For the initial point $(P_2)$, the situation is clearer and it does not depend on the starting time for the treatment. The volume at the final time  decreases with the number of doses $N_c$ in Table \ref{tablaresultados1}. In  Table \ref{tablaresultados2} the reduction is smaller, but this is due to the fact that the total dose also decreases almost $9\%$. In both tables the  metronomic chemotherapy appears to be the preferred option. Moreover, comparing the results in Table~\ref{tablaresultados1} between the discrete treatment and the constant effect treatment, we see in the sixth and ninth columns that the final tumor volumes corresponding to the constant effect are very close, but slightly lower than those for the discrete treatment. Figure~\ref{fig:EXP4cteDISCRETeffect}  shows on the right    a similar evolution of the tumor volumes obtained with both treatments for one of the cases in the Table~\ref{tablaresultados1}.
\end{itemize}

Let us recall that, according to the National Cancer Institute (NCI) of USA (\cite{NCI}), metronomic chemotherapy is ``the treatment in which low doses of anticancer drugs are given on a continuous or frequent, regular schedule (such as daily or weekly), usually over a long time." There are several medical reasons for this choice such as minimal adverse effects and no need of prolonged drug-free breaks. Among all possible options for discrete therapies appearing in the literature, many authors prefer  metronomic chemotherapy. We have seen that there are also mathematical reasons to recommend it: in a previous study, we pointed out the superiority of metronomic chemotherapy in many situations (\cite{Fernandez-Pola-SainzPardo2022}), using a different approach.

Concerning the antiangiogenic drug effect $E_a$, we will use the expression (\ref{EECA}) with $c_a$ denoting the concentration of the drug (in our case,  Bevacizumab). Parameters associated with this drug are taken from \cite{Poleszczuk-etal-2015}; let us mention that, as it is commonly done in the literature (see for instance \cite{{Hahnfeldt-etal-1999}} and \cite{Poleszczuk-etal-2015}), although it is not usually made explicit, we have once again preferred to handle $\sigma_a$ jointly with the drug coefficient $e_a$. 

\begin{table}[ht]  
    \centering
    \begin{tabular}{|c|c|}
    \hline
     $\lambda_a$  & $\sigma_a e_a$\\
   $day^{-1}$ &$kg/mg$ \\
    \hline $0.0799$&$0.4755$ \\
         \hline
    \end{tabular}
    \caption{Parameters for the antiangiogenic drug Bevacizumab (\cite{Poleszczuk-etal-2015}).}
    \label{tablaparametros3}
\end{table}

Firstly, we have applied three antiangiogenic discrete treatments with the same total dose, starting at $t_1 = 0$ and finishing at $t_N = 28$ days, obtaining the results shown in Table \ref{tablaresultados2c}. For both initial points $(P_1)$ and $(P_2)$ these are quite similar and stable. Moreover, we have observed that there is no benefit in postponing the onset to some $t_1 > 0$. Recall that here the abnormal behavior does not appear because there is no cytotoxic treatment. We have included the final volumes at $31$ and also at $72$ days, because the antiangiogenic effect will continue to act for a long period of time after the end of treatment: in the case of Bevacizumab, since its half-life is about $9$ days, $97\%$ of the drug will not disappear until $44$ days later. Here, there is a huge difference with the cytotoxic drug, because the half-life of Paclitaxel is about $3$ hours.  It should be noted that during the antiangiogenic treatments the volume of the vasculature will remain mainly smaller than the tumor volume in all cases, but will be larger from a few days after the end of the treatments. Therefore, this seems an appropriate strategy to normalize the tumor for subsequent cytotoxic treatment; depending on whether this should be applied (closer to $t=31$ or $t=72$ days), a more or less metronomic strategy can be chosen. 

\begin{table}[ht]
    \centering
    \begin{tabular}{|c|c|c||c|c||c|c|}
    \hline 
     \multicolumn{3}{|c|}{} & \multicolumn{2}{c|}{$(P_1)$} & \multicolumn{2}{c|}{$(P_2)$}  \\
    \hline
    \hline
    $\Delta_a$ &$N_a$ & $\overline{d}_a$ & $V(31)$ & $V(72)$  & $V(31)$ & $V(72)$ \\
    $day$ &  & $mg/kg$  & $mm^3$& $mm^3$  & $mm^3$& $mm^3$ \\
        \hline $7$ & $5$ & $6.4$ & $2600$ & $5726$ & $2698$ & $5764$ \\
        \hline $4$ & $8$ & $4$ & $2492 $ & $5802$ & $2570$ & $ 5835$ \\
         \hline $2$ & $15$ & $2.13$ & $2442$ & $5865$ & $2505$ & $5892$ \\
        \hline
    \end{tabular}
    \caption{Results for antiangiogenic discrete treatments starting at $t_1 = 0$ with the same total dose.}
    \label{tablaresultados2c}
\end{table}

In this work, we have used cytotoxic or antiangiogenic drugs separately, but it is clear that their combination could be applied in some way to optimize the outcome. There are some papers that address this issue, such as \cite{Fujita-etal-2007} and \cite{Benzekry-etal-2017}; in our opinion, this question deserves a specific mathematical study, taking into account other important external factors, such as side effects. To get an idea of the sensitivity of the problem with respect to the initial condition, we present here the following example: by administering two antiangiogenic doses of $\overline{d}_a = 4$ $mg/kg$ at days $t_1 = 0$ and $t_2 =4$ to $(P_2)$
followed by a cytotoxic dose of $\overline{d}_c = 26. 67$ $mg/kg$ on day $7$, the tumor reaches the critical level very quickly ($V(8) \approx V_c$). This means that $(P_2)$ becomes an abnormal point for this combination. Let us recall that $(P_2)$ behaved as a normal point for all treatments presented earlier in this paper.

We have also performed some numerical experiments to measure the effect of noise on the administration times, since in practice it is very difficult to be completely accurate to this aspect. For this purpose, we have added a truncated Gaussian noise with zero mean and variance $\Delta_c/30$ lying at $[-\Delta_c/10,\Delta_c/10]$ to the dosing times of the discrete cytotoxic treatment for the initial point $(P_1)$ and repeated $1000$ runs, calculating the mean of the final volumes $V(31)$ as well as its range. Comparing the results of Tables \ref{tablaresultados2} and \ref{tablaresultados5b}, we observe that the numerical results are very stable with respect to these perturbations and that again the metronomic treatments perform slightly better, achieving almost the same final volumes than in Table \ref{tablaresultados2}.

\begin{table}[ht]   
    \centering
    \begin{tabular}{|c|c|c|c|c|c|}
    \hline
    $\Delta_c$ &$N_c$ &  $\overline{d}_c$ &  $Mean$ $V(31)$  & $Interval \ for$ $V(31)$ \\
    $day$ &  & $mg/kg$  & $mm^3$ & \\
     \hline $5$ & $6$ & $26.67$ & $13384$ & $[13264,13591]$ \\
    \hline $3$ & $10$ & $16$ & $12957$ & $[12905,13025]$ \\
       \hline $2$ & $14$ & $10.67$ & $13258$ & $[13232,13287]$ \\
     \hline $1$ & $28$ & $5.33$ & $12968$ &  $[12959,12977]$\\
    \hline $0.5$ &$55$ & $2.67$ & $12962$ &  $[12960,12965]$\\
    \hline $0.33$ &$82$& $1.78$& $12971$ & $[12970,12972]$\\
         \hline
    \end{tabular}
    \caption{Results of $1000$ experiments adding a truncated Gaussian noise to the  dosage times.}
    \label{tablaresultados5b}
\end{table}

Finally, it is also interesting to measure the sensitivity of each parameter with respect to noise. To this end, for each parameter we have performed $1000$ experiments by adding a uniformly distributed random variable with values in the range of $\pm 10\%$ of its initial value, keeping the other parameters unchanged. The results are collected in the Table \ref{tablaresultados6}, with  the mean and range of the $1000$ values of $V(31)$ using the cytotoxic discrete treatment for the initial point $(P_1)$ with $N_c=82$, $\Delta_c = 0.33$ days and dose $\overline{d}_c = 1.78$ $mg/kg$, starting at $t_1 = 3$ days. It is worth mentioning that we have preferred to handle the ratio $k_2/\sigma_c$ as a single parameter instead of handling $k_2$ and $\sigma_c$ separately, see (\ref{E100}). Clearly, the first consequence of this analysis is that the spontaneous loss of
vasculature $\lambda_2$ has no significant influence on the tumor size (this was predicted by \cite{Hahnfeldt-etal-1999} and confirmed by \cite{Poleszczuk-etal-2015}). 
Furthermore, from our analysis we conclude that the most influential parameters are the stimulation due to the tumor $b$ and the endogenous inhibition $d$. Compared with \cite{Poleszczuk-etal-2015}, here they seem to have a somewhat more relevant role. The other four parameters (the tumor growth parameter $\lambda_1$ and those related with the cytotoxic drug: $\lambda_c$, $k_1$ and $k_2/\sigma_c$) seem to have a similar weight on the final tumor volume.  

\begin{table}[ht]
    \centering
    \begin{tabular}{|c|c|c|c|}
    \hline
    $Perturbed \ \ parameter$ &  $Mean$ $V(31)$  & $Interval \ for$ $V(31)$ \\
     & $mm^3$ & \\
     \hline $\lambda_1$ & $12976$ & $[12457,13449]$ \\
     \hline $\lambda_2$ & $12971$ & $[12967,12975]$ \\
     \hline $b$ &  $13103$ & $[11155,15323]$ \\
     \hline $d$ & $12943$ & $[10787,15293]$\\
     \hline $\lambda_c$ & $12902$ &  $[11958,13674]$\\
     \hline $k_1$ & $12967$ & $[12056,13744]$\\
     \hline $k_2/\sigma_c$ & $12903$ & $[11967,13668]$\\
    
         \hline
    \end{tabular}
     \caption{Results of $1000$ experiments adding $\pm 10\%$ uniform noise to the indicated parameter.}
    \label{tablaresultados6}
\end{table}

All the previous computations were performed on a $3.9$ GHz Core $i5-8265U$ machine, with $8$ GB RAM running under the $64$-bit version of Windows $11$ and MATLAB $R2024b$.

\section{Conclusions}

From the above theorems and numerical experiments, we can draw the following conclusions:
\begin{itemize}

\item[1)] Cytotoxic chemotherapy transforms the tumor-vasculature system into an unstable one when it is applied at a sufficiently high concentration. Then, in normal cases (the majority), tumors shrink in size or slow down their growth, as is well known, but there are also abnormal situations whose volume (paradoxically) increases faster with cytotoxic chemotherapy treatment than without any treatment. There is extensive medical literature mentioning this paradoxical phenomenon (see \cite{Chang-etal-2019}, \cite{Dalterioetal2020}, \cite{Haak-etal-2021} and \cite{Sulciner-etal-2018} among others). In this work, we have presented a mathematical explanation for this paradox, using a classical minimally parameterized model together with the Norton-Simon hypothesis. As far as we know, models studied by other authors do not include these paradoxical cases, because they assume instead the log-kill hypothesis (see  for instance \cite{Urszula2015}). Hence, the Norton-Simon hypothesis is a key point in our argumentation. On the other hand, our main results remain unaffected if we change the Gompertz ODE to the logistic one, for instance.

\item[2)]  The set of abnormal cases varies with the treatments used and seems to be small, but expands with the effect of the cytotoxic chemotherapy treatment. It is mainly located around the area where the tumor volume is very large and the vasculature is small. The boundary between this and the set of normal cases is quite blurred and it can be approximately determined by numerical techniques in the model.  It seems crucial to distinguish whether a situation is abnormal before applying a given cytotoxic chemotherapy treatment. Two initial points having the same tumor volume with different (but close) initial vasculature can lead to very different results when treated with the same therapy. As we have seen, a tumor can move from being abnormal to be normal, changing the employed  treatments (see $(P_1)$ in Tables \ref{tablaresultados1} and \ref{tablaresultados2c}) ; but we should be very careful because just the opposite situation can also happen: a tumor moving from being normal to be abnormal, see for instance $(P_2)$ in Table \ref{tablaresultados1} and $(P_2)$ with the  combined treatment mentioned in the above section.

\item[3)] Fortunately, there are different ways to avoid the abnormal situations. One is simply to delay the onset of treatment, if this is possible. In this case, the size of the tumor vasculature should be monitored until it reaches an adequate size. To our knowledge this strategy has not been reported in the literature, but some references suggest that monitoring the size of the vasculature is an important tool in oncology, see \cite{Poleszczuk-etal-2015}.
A second one is to apply a previous antiangiogenic treatment to achieve a transient normalization of the vasculature and later a cytotoxic treatment. This technique has been reported in various medical sources (see \cite{Jain2001, Jain2005, Jain-etal-2011}) and our system also reflects this fact.

\end{itemize}

{\bf Acknowledgement.}

The first author was supported by MCIU/AEI/10.13039/501100011033/ under research project  PID2023-147610NB-I00.

\end{document}